\theoremstyle{plain}
\newtheorem{theorem}{Theorem}
\newtheorem{proposition}{Proposition}[section]
\newtheorem{observation}[proposition]{Observation}
\newtheorem{corollary}[proposition]{Corollary}
\theoremstyle{definition}
\newtheorem{definition}[proposition]{Definition}
\theoremstyle{remark}
\def\printname#1{
	\if\draft y
		\smash{\makebox[0pt]{\hspace{-0.5in}
			\raisebox{8pt}{\tt\tiny #1}}}
	\fi
}
\newlength{\standardunitlength}
\long\def\@makecaption#1#2{%
     \vskip 10pt

\setbox\@tempboxa\hbox{
       \small\sf{\bfcaptionfont #1. }\ignorespaces #2}%
     \ifdim \wd\@tempboxa >\captionwidth {%
         \rightskip=\@captionmargin\leftskip=\@captionmargin
         \unhbox\@tempboxa\par}%
       \else
         \hbox to\hsize{\hfil\box\@tempboxa\hfil}%
     \fi}
\font\bfcaptionfont=cmssbx10 scaled \magstephalf
\newdimen\@captionmargin\@captionmargin=2\parindent
\newdimen\captionwidth\captionwidth=\hsize
\def\B{\mathcal B}
\def\K{\mathcal K}
\def\a{\alpha}
\def\b{\beta}
\begin{document}

\title{DNA-Inspired Information Concealing}

%
%
%
%
%

\author{Lukas~Kencl
        and~Martin~Loebl}
\thanks{M. Loebl is with the Department of Applied Mathematics and Institute of Theoretical Informatics (ITI), Charles University, Prague, Czech Republic.}
\thanks{L. Kencl is with the Research and Development Centre (RDC), Czech Technical University, Prague, Czech
Republic}



\begin{abstract} 

Protection of the sensitive content is crucial for extensive information 
sharing. We present a technique of information concealing, 
based on introduction and maintenance of families of repeats. 
Repeats in DNA constitute a basic obstacle for its reconstruction by hybridisation. 
Information concealing in DNA by repeats is considered in \cite{letter}.
\end{abstract} 

\maketitle

\section{Introduction} 
\label{S_Introduction}

Contemporary computer systems may be distributed and
may consist of many interconnected processing units or a large
number of networked computer subsystems. In addition contemporary
digital networks may consist of a large number of end- and
intermediate- nodes. In all these systems, information, in the
form of the sequences over some alphabet of symbols, is
circulating or being stored. The entity controlling a subsystem or
a node is often unwilling or prohibited to share this
information-sequences with other nodes. However, sharing of some
reduced local information might be very useful for purposes of
security, stability and various analysis of the system
performance, and for data mining. Such analysis might for example
allow to identify frequently appearing segments by performing
approximate statistical analysis on segment frequency, allowing to
detect replicating malicious code-worms. It also allows to
identify segments-markers of computer viral infection, by
detecting patterns existing in some database of malicious
sequences. Such databases are used e.g. in contemporary intrusion
detection systems or spam filters. It has been shown that being
able to perform pattern matching against only fixed-length
prefixes or substrings of longer sequences can provide approximate
hints as to the presence of suspicious
content~\cite{ramaswamy06fingerprinting}. Likewise, established
worm detection techniques such as Autograph~\cite{Kim:ATA04} or
EarlyBird~\cite{singh04earlybird} are based on counting frequency
of small blocks of a fixed size.

Sharing of reduced local information among the members of an
interconnected computer system or communication network thus helps
to discover attacks earlier. Affected parts may be isolated and
further attack spread prevented. The benefits of sharing local
information may be reaped in case of existence of a computational
information processing, which preserves local information (e.g.
all segments of certain maximal length) and makes impossible to
reconstruct longer or sensitive parts of the information
sequences.

We call such information processing {\em concealing}. The systems
which conceal information and share the concealed information are
likely to possess a competative advantage in the form of
robustness, attack resistance and immunity due to ability to
exchange, publish and protect information. Clearly, any
information concealing algorithm needs to address two conflicting
goals:
\begin{enumerate}
\item preserving \emph{presence} and, possibly, \emph{frequency
rank} of segments of given size (making spam identification and
worm detection still possible), while

\item making reconstruction of content longer than the predefined
limit computationally hard (e.g. disabling interpretation or
understanding of the private content).
\end{enumerate}

\subsection{Main contribution}
\label{sub.contrr}

The main contribution of this paper is
\begin{itemize}
\item Formulation of the information concealing problem \item
Presentation of an information concealing algorithm \item Analysis
of the algorithm and a proof of the hardness of reconstruction of
the input sequence
\end{itemize}

\section{Related Work} 
\label{sec.related}

\subsection{Repeats in DNA}
\label{sub.euler} Our inspiration comes from an important feature
of eukaryotic DNA, namely that it contains various {\em repeat
families}, and that their presence constitutes a basic difficulty
in DNA reconstruction by hybridisation \cite{P}.

A large proportion of eukaryotic genomes is composed of DNA
segments that are repeated either precisely or in variant form
more than once. Highly repeated segments are arranged in two ways:
as tandem arrays or dispersed among many unlinked genomic
locations. As yet, no function has been associated with many of
the repeats \cite{MB}. In the paper \cite{letter} which
accompanies this paper, the authors propose that in eukaryotes the
cells have DNA as a depositary of concealed genetic information
and the genome achieves the self-concealing by accumulation and
maintenance of repeats. The protected information may be shared
and this is useful for the development of intercellular
communication and in the development of multicellular organisms.

The assertion that the repeats are maintained in DNA in a
programmed way for self-concealing explains basic puzzling
features of repeats: the uniformity along with the polymorphism of
the repeated sequences; the freedom of the repeated DNA to adopt
quite different primary sequences in closely related species;
apparent non-functionality of the precise amount or the precise
sequence of the repeats.

The containment of repeats versus DNA sequencing problem is
receiving extensive attention of biologists, computer scientists
and mathematicians (see \cite{G}, \cite{P}, \cite{Bo}).

\subsection{Repeats versus DNA reconstruction}
\label{sub.euler} We explain the basic idea of concealing by
repeats in this subsection. Assume we are given a collection $\K$
of segments of DNA. Each segment $S$ from $\K$ is divided into two
parts, the initial part $S(I)$ and the terminal part $S(T)$. We
thus may write $S= S(I)|S(T)$. This is an artificial assumption
imposed only for the clarity of the presentation.

A {\em reconstruction} of $\K$ is a sequence of its segments so
that the terminal part of each segment agrees with the initial
part of next segment in the sequence. If several of these initial
and terminal parts coincide, there may be an exponential number of
possible reconstructions.

Let us consider a very simple example. Let $\K$ be the following
collection of segments, where the initial and the terminal parts
are divided by the vertical line:
$$
A|B, B|A, A|C, C|A, B|C, C|B.
$$
the following sequences are some of the possible reconstructions:
$$
ABACBCA, ACABCBA, BACABCB, ABCACBA.
$$
In this simple example, even unlimited computational power is
useless to anybody who wants to obtain the correct reconstruction
from the many possible reconstructions. This phenomenon may well
be described in terms of the de Bruin graph: this graph has a node
for each segment which is an initial or a terminal part of an
element of $\K$. For each segment $S$ of $\K$ there is an arrow (a
directed edge) from $S(I)$ to $S(T)$.

\begin{figure}[h]
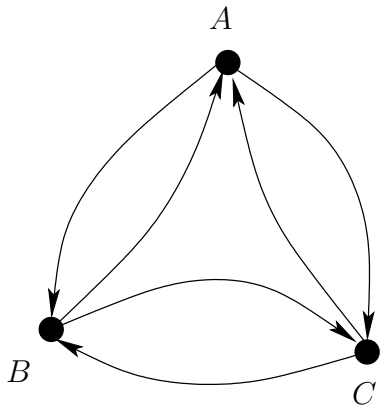

\begin{center}
\input bg.pstex_t
\end{center}
\caption{De Bruin graph for $\K$}
 \label{fig.bg}
\end{figure}

The possible reconstructions now correspond to the walks on the de
Bruin graph so that each directed edge is traversed exactly once.
These walks are usually called Euler walks. If a node of the de
Bruin graph has more than one outgoing incident directed edge,
then locally there are several independent ways to traverse these
edges. The number of the Euler walks of the de Bruin graph is
therefore typically exponential in the number of these nodes (see
\cite{Bo} for the calculations).

\subsection{Concealing in Information and Communication Technologies}
\label{sub.ict}

The concept of hiding private or sensitive data but preserving
some form of structural information has been studied recently in
various sub-domains of ICT. Some techniques concentrate on hiding
the originator of information, i.e. \emph{anonynimization}, other
focus on enabling particular functions over the data that can be
shared among multiple partners, such as \emph{private matching}.

\subsubsection{Concealing Network Data}


An anonymization scheme over the network packet \emph{IP
addresses} called CryptoPan~\cite{xu02cryptopan} preserves the
prefix hierarchy of the original addresses, while making them
computationally hard to reconstruct by using hashing. This in turn
allows to share network traces (with packet headers only), with
preservation of the prefix hierarchy.

Similarly, in~\cite{rexford04routeranonymization}, structure of
the router configuration files and data is preserved, while the
actual values are obfuscated.

A technique to process and transform the network \emph{packet
payload} has been proposed in~\cite{paxson03anonymization}. This
method uses dictionaries of important sequences that are valuable
from the data mining perspective and should be preserved, while
encrypting the rest of the information with cryptographically
strong hash function. This technique performs well in terms of
data protection, however, it only allows to study content portions
pre-determined by a known list, and thus does not allow to study
the payload to detect previously unknown content, such as e.g.
malicious subsequence.

The popular Bloom filter~\cite{bloom70hashcoding} approach is used
in constructing the Hierarchical Bloom Filter \emph{payload
attribution} technique~\cite{shanmugasundaram04bloom}. A Bloom
filter can store (incompletely but efficiently) input items (which
can be substrings) and easily answer set membership queries. It
consists of $k$ hash functions, each associating one of $m$
numbers to each input item. Set membership queries exhibit no
false negatives, but can have false positives.


Payload attribution with a Hierarchical Bloom Filter stores
segments of network packet payloads with their IP source and
destination addresses. Each payload is cut into segments
$s_1,\dots, s_n$. The $s_i$'s are stored in a Bloom filter of
level $0$, {\it the pairs} $s_1s_2, s_3s_4,\dots$ in Bloom filter
of level $1$, quadruples in level $2$, and so on. A query on an
excerpt of payload, which may consist of several consecutive
blocks, may answer the source and destination address by running
through consecutive hierarchy levels.

The authors propose deployment at network concentration points.
Privacy protection is to be achieved by restricting access of
entities that can pose queries, otherwise exhaustive attacks might
lead to payload reconstruction.

\subsubsection{Private matching} \label{sub.matching}

Private matching~\cite{li04privatematching,
agrawal03privatesharing} focuses on the problem of two entities
trying to find common data elements in their databases, without
revealing private information. The basic property (and difference
to the general information concealing problem) is that only two
parties are involved; a multiparty solution is a future work
suggestion. Further problems are asymmetry in the sequence of
information exchange among the parties and needed presumption of
honesty ('semihonesty' in the paper).

%

Private matching is a special case of cryptography theory of {\it
multi-party computation}: $m$ parties want to compute function $f$
on their $m$ inputs. In the {\it ideal model}, where a trusted
party exists, the parties give their inputs to the trusted
authority, it calculates $f$, and returns the result to each
party. The ideal model assumes an ideal situation: for example, no
protocol can prevent a party to change its input before the
communication is started. A secure multiparty computation protocol
{\it emulates} what happens in the ideal model.

Paper~\cite{li04privatematching} also introduces 'data ownership
certificates' to modify the private matching protocols to be
unspoofable. This technique is shown to be useful in a more
practical setting to enable privacy-protecting sharing of e-mail
white-lists in~\cite{kaminsky06reliableemail}.

\subsubsection{Data Masking} \label{sub.masking}

Various techniques of masking, sanitizing and obfuscating data
have been studied to enable test- or third-party development over
sensitive databases (such as the Human Resources data).  After
sanitization, the database remains usable - the look-and-feel and
some relations and distributions are preserved - but the
information content is secure. The used techniques include
masking, shuffling, substitution, number-variance, encryption etc
~\cite{datamasker}. These techniques share a similar goal with
information concealing, but focus on structured data without the
need of preserving the local information.

\subsubsection{Data Mining and Anonymization} \label{sub.mining}

In data mining, anonymization mechanisms (obfuscating the
originator or the private part of the data) are currently studied
intensively. Privacy mechanisms can be classified into several
categories, according to where they are deployed during the life
cycle of the data. The mechanism proposed in this paper falls into
the category where the individuals trust no one but themselves,
and they conceal  their respective data before they make them
available for sharing. The existing algorithms in this
category~\cite{MS, AS, AH, EGS, MS1} are called local
perturbation; they are based on different ideas then the
concealing procedure proposed in this letter.

In another category, data publishing, data are anonymized at a
central server; the individuals are required to trust this
server~\cite{RS}. Anonymization in social networks is studied
in~\cite{HMJW}.

An important theoretical foundation for data anonymity and
originator protection was laid in~\cite{sweeney02k-anonymity}. The
k-anonymity model for protecting privacy allows holders to release
their private data without being distinguishable from at least k-1
other individuals also in the release.

\subsubsection{Steganography} \label{sub.stega}

This form of information hiding ~\cite{johnson01steganography,
cox07steganography} is a related art and science of writing hidden
messages in such a way that no one apart from the sender and
intended recipient realizes there is a hidden message; this
nowadays includes concealment of digital information within
computer files. Comparably, steganalysis is the art of detecting
the hidden information.

Steganography is a mature science, in particular focussing on the
domain of Digital Rights Management (DRM), where various
'watermarking' or 'tamper-proofing' techniques may seamlessly
embed extra information about the origin of a digital work within
itself. This is a different goal than the proposed information
concealing. While the embedded information may be well concealed
and thus very hard to reconstruct, data mining of such information
would not be generally possible. However, it would be very
interesting to apply the steganalysis tools to the information
concealing 'attacker problem' (see Section~\ref{sec.concprob}) and
it certainly belongs among our future work.

\subsubsection{Information Retrieval} \label{sub.retrieval}

The "attacker problem" of concealed string reconstruction (see
Section~\ref{sec.concprob}) has a strong connection to the problem
of information retrieval~\cite{manning08retrieval}, where
probabilistic information about the expected string (e.g. natural
text) may be used to derive further information or assist text
reconstruction.

\subsection{Segments shuffling}
\label{sub.shuffling}
Finally we mention that our first attempt to solve the anonymization problem~\cite{zamora06}
was using random permutations of a collection of short overlapping segments. This method
however by itself doesnot lead to concealing the original data information. It is shown in this paper
that in order to sufficiently extend the families of repeats of the resulting sequence and make the concealing successful, other procedures need to be performed as well.
In particular the overlapping segments containing complete
local information need to be prolonged by attaching additional short segments to their beginning and/or their end. The shuffling permutation also needs to satisfy some properties. This is described in the rest of the paper.



\section{Information Concealing Problem}
\label{sec.concprob}

We introduce formally the information-sequence concealing problem.
Let $|\omega|$ denote the number of symbols (length) of sequence
$\omega$. The \emph{sequence concealing} problem is the following:
Given a sequence $\omega$ and a small positive integer $k$, we
want to transform $\omega$ to another sequence $\omega_F$ so that:
\begin{enumerate}
\item[I.-] If $s$ is a segment of $\omega$ with $|s| \leq k$, then
$s$ is a segment of $\omega_F$.

\item[II.-] It is computationally hard to reconstruct sequence
$\omega$ from $\omega_F$.

\item[III.-] The length of $\omega_F$ is linear in $|\omega|$.

\item[IV.-] It is also desirable that with low probability, a
segment not in $\omega$ appears in $\omega_F$, and that relative
frequency (i.e., frequency rank)
 of segments of $\omega$ of a given length is preserved in $\omega_F$.
The precise statement of these two conditions is however strongly
application dependent.

\end{enumerate}

Given the statement of the information concealing problem, the key
issue is how much information about $\omega$ can an attacker
deduce from $\omega_F$; let us call this issue the {\em attacker
problem}.

Clearly, the answer to the attacker problem is
application-dependent. If the input sequence $\omega$ is very
restrictive, e.g. if a short prefix uniquely determines larger
part of $\omega$ and the k-segments of $\omega$ may be
distinguished within the larger k-segment superset of $\omega_F$,
then inevitably large part of input $\omega$ may be reconstructed
from $\omega_F$. In quite a number of practical situations (DNA
sequence, computer program, sound and video trace,
 text on non-specific topic), however, this is not the case. Moreover, for restrictive
input sequences, we can perform preparatory procedures (as
procedure $S$ described below) which make the input sequence less
specific.

This partly justifies the following {\em consistency assumption}
concerning the attacker problem which we need to make in order to
carry the security analysis of the concealing algorithm.

\begin{proposition}
\label{p.cons} The complete input of the attacker problem, i.e.
all the useful information an attacker has about $\omega$, is
$\omega_F$, the length $|\omega|$, the length $k$ of the preserved
segments and the concealing algorithm used in obtaining
$\omega_F$.
\end{proposition}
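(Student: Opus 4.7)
The plan is to justify Proposition~\ref{p.cons} not as a deductive theorem but as a modeling assumption whose validity rests on two pillars: a Kerckhoffs-style principle for the public parameters, and a reduction argument that absorbs arbitrary external priors about $\omega$ into the items already listed. First I would formalise ``useful information'' as any (possibly randomised) function an attacker can compose with external data to reduce the posterior entropy of $\omega$ given $\omega_F$; the claim then reads that no such side channel lowers this entropy by a non-negligible amount beyond what the four listed quantities already permit.

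Next I would establish that the three publicly available items must be included in the attacker's input, so that they cannot be appealed to as secret. The algorithm itself is public by a Kerckhoffs argument: the distributed setting of Section~\ref{S_Introduction} requires the same concealing procedure to run on every participating node in order to produce interoperable outputs, so secrecy of the algorithm is not a realistic defence. The length $|\omega|$ is recoverable from $|\omega_F|$ up to a constant factor by condition~III of Section~\ref{sec.concprob}, so handing it to the attacker directly loses nothing. The parameter $k$ must likewise be known to any honest user of $\omega_F$---this is the very purpose of condition~I---so by the same Kerckhoffs reasoning it is not secret either.

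The substantive direction is the converse: no \emph{further} side information about $\omega$ helps. An attacker who knows that $\omega$ is, say, English text, DNA, or machine code does in principle possess an exploitable prior, as the discussion preceding the proposition already concedes in the ``restrictive input'' case. The plan is to neutralise this by composing the concealing algorithm with the preparatory procedure $S$ alluded to in Section~\ref{sec.concprob}: the role of $S$ is to map $\omega$ to a representative of a large equivalence class of sequences sharing its $k$-segment profile. Once $S$ has been applied, any additional prior about the original $\omega$ projects to information about this $k$-segment profile; but the profile is already recoverable from $\omega_F$ by condition~I, so it contributes nothing beyond the four listed items.

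The main obstacle, and I expect it to be the genuine weakness of the justification, is making this equivalence-class reduction quantitative. One would like a statement of the form ``after $S$, the conditional distribution of $\omega$ given its $k$-segment multiset is close to uniform on the set of sequences consistent with that multiset,'' but this is strongly application-dependent and cannot be expected to hold uniformly across arbitrary inputs. Consequently the honest reading of Proposition~\ref{p.cons} is conditional: \emph{provided} the preparatory step $S$ achieves this indistinguishability on the intended input class, the attacker problem reduces exactly to the four listed quantities, and the proposition then serves as the working hypothesis under which the subsequent hardness analysis of the concealing algorithm is carried out.
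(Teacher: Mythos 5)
The paper offers no proof of this statement at all: it is introduced explicitly as a ``consistency assumption'' that is only ``partly justified'' by the preceding discussion of restrictive inputs and the preparatory procedure $S$, and no proof environment follows the proposition. Your proposal correctly recognises this and articulates, rather more carefully than the paper itself, the same informal justification (public parameters are unavoidably known; $S$ is meant to neutralise external priors) together with its honest, application-dependent caveat, so it is faithful to the paper's treatment.
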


Thus, an attacker may use the list of frequencies of repeats of
segments of $\omega_F$ along with the knowledge of the concealing
algorithm to attempt the reconstruction of $\omega$.

\section{Concealing by repeats}
\label{sec.conrr} The input of the problem is a sequence over an
alphabet. We first turn it into a cyclic sequence by connecting
its beginning and end.

Next we describe five procedures which are used in the algorithm.
 The basic pattern of all
the procedures is the same and may be described as follows: the
input is a cyclic sequence $\omega$. First, $\omega$ is
partitioned into consecutive disjoint {\em blocks}. Then the
terminal part of the preceding block of length $o$ (the {\em
overlap}) is added in front of each block. The resulting segments contain all the studied local
information; depending on the procedure, these segments will also contain some excess information 
which is vital in a proposed composition of the procedures which forms our concealing algorithm.
Next, a segment called {\em dust} can but neednot be added behind each segment. The enhanced blocks are
called the {\em cards}. The last step consists in arranging the
cards into the output cyclic sequence $\omega_F$.

The first procedure $S$ has a preparatory character in the
concealing algorithm. Several runs of $S$ have the role of
breaking the local sequential order in the input sequence.

\subsection{Procedure $S(\omega,o,lb,ub)$}
\label{sub.s}

Its input is a cyclic sequence $\omega$, and it has parameters
$o,lb, ub$; $o$ stands for the size of the overlap, $lb$ is for
lower bound of the length of a block, and $ub$ is for the upper
bound of the length of a block. The procedure $S(\omega,o,lb,ub)$
is defined by 1.-4. below.

\begin{enumerate}
\item[1.] We partition (sometimes we say that we {\em cut})
$\omega$ into consecutive disjoint {\it blocks} $P_1,\dots, P_m$
such that the length of each $P_i$ is chosen at random between
$lb, ub$.
\item[2.] We add overlap of length $o$ in front of each block. The
overlapping segments thus contain all the original sub-segments
 of length up to $o+1$.
\item[3.] The blocks enhanced by the overlaps now start and end
 with the corresponding overlaps. If these were arranged into a cyclic sequence,
the overlaps would neighbor. This may help an attacker in
reconstruction. To break the neighborhood relationship of the
overlaps, we may add dust (a randomly chosen segment) behind each
block. Adding dust is optional and application dependent. A
natural restriction is that the dust is a segment of the input
sequence and that the average length of dust is $1/2(lb+ub)- o$ to
match the average length of the segments complementing the
overlaps. However, depending on applications, and the stringency
of condition $[IV]$ of the sequence concealing problem, length of
dust may be different and the dust need not be a segment of the
input sequence.
\item[4.] We arrange the resulting cards randomly into a cyclic
sequence.
\end{enumerate}


As an illustration we perform $S$ on an example input sequence:

\

\begin{center}
 \fbox{\parbox{0.96\columnwidth}{{\em Example 1: Procedure $S(\omega,o,lb,ub)$}

\

Input $\omega =$ '{\bf the aim of this paper is to present an
information concealing algorithm}', parameters $o= 3$, $lb= 4$,
$ub= 6$.
\begin{enumerate}
\item[1.] First, the input sequence is partitioned randomly into
blocks of length $4,5$ or $6$. The blocks are divided by '$+$'
below:
\newline
'{\bf the ai$+$m of t$+$his $+$pape$+$r is $+$to pr$+$esent$+$ an
i$+$nfor$+$matio$+$n co$+$ncea$+$ling $+$algor$+$ ithm$+$}'
\item[2.] Next we add overlap (of length $o= k-1= 3$) in front of
each block:
\newline
'{\bf thmthe ai$+$ aim of t$+$f this $+$is pape$+$aper is $+$is to
pr$+$ present$+$ent an i$+$n infor$+$formati$+$tion co$+$
concea$+$cealing $+$ingalgor$+$gorithm$+$}'
\item[3.] Next we add the dust behind each block (of length
approximately 2), and we get the cards:
\newline
'{\bf thmthe aip$+$ aim of tim$+$f this con$+$is pape in$+$aper is
a$+$is to pro p$+$ presentese$+$ent an ilgo$+$n infori
$+$formatifo$+$tion co $+$ concea ci$+$cealing pa$+$ingalgor
p$+$gorithmap$+$}'
\item[4.] Finally the output is given by arranging the cards in a
random order (here we use the order
$14,9,10,13,5,3,12,1,6,4,7,11,8,15,2$):
\newline
'{\bf ingalgor pn infori formatifocealing paaper is af this
conconcea cithmthe aipis to pro pis pape in presentesetion co ent
an ilgogorithmap aim of tim}'
\end{enumerate}

}}
\end{center}

\subsection{Procedure $S^1(\omega,lb,ub)$}
\label{sub.S1}

Procedure $S^1(\omega,lb,ub)$ is as $S$ but the overlap is always
the whole preceding block - typically exceeding the size needed to
preserve the studied local information (this excess is used in the
composition of the procedures forming our concealing algorithm). Hence, if
the blocks are
$$
\omega= P_1P_2P_3\dots P_m,
$$
then the cards of $S^1$ are $P_1P_2, P_2P_3,\dots, P_mP_1$.

Each $P_i$ appears once as initial segment and once as terminal
segment of each card. Hence, the cyclic consecutive order of the
cards of $S^1$
 may be described by a permutation $\pi$ of $1,\dots, m$; for further discussions it turns out useful
to define such permutation so that it assigns, to each terminal
block of a card, the initial block of the next card. By {\em
permutation of $1,\dots, m$} we mean a bijection from set
$\{1,\ldots, m\}$ onto itself. If $\pi$ is a permutation then
$\pi^{-1}$ denotes the inverse permutation ($\pi(x)= y$ if and
only if $\pi^{-1}(y)= x$). Hence, in our formalism, card
$P_{i-1}P_i$ is followed by card $P_{\pi(i)}P_{\pi(i)+1}$.

The output of $S^1$ thus always has form
$$
P_1P_2P_{\pi(2)}P_{\pi(2)+ 1} \dots
P_{\pi^{-1}(1)-1}P_{\pi^{-1}(1)}.
$$
For instance, if we have $m=3$ then the cards are $P_1P_2, P_2P_3,
P_3P_1$ and a shuffling which results in sequence
$P_1P_2P_3P_1P_2P_3$ is described by permutation $\pi(1)=2,
\pi(2)=3, \pi(3)=1$.

\subsubsection{Acceptable permutations}
\label{sub.ap}

For our purposes, not all permutations $\pi$ are acceptable; let
us formally denote by $\mathcal A$ the set of all the {\it
acceptable permutations}. To define $\mathcal A$, we first
introduce an auxiliary bipartite graph $G(\pi)$.

\begin{definition}
\label{def.auxg} Graph $G(\pi)$ has vertex-set $V= V_1\cup V_2$
where $V_1= \{u_1,\ldots, u_m\}$ and $V_2=\{v_1,\ldots, v_m\}$.
The edge-set of $G(\pi)$ is the union of three disjoint perfect
matchings of the vertex-set, namely:

\begin{enumerate}
\item [1.] The perfect matching $M_1$ consisting of the edges
$\{u_i, v_i\}$.
\item [2.] The perfect matching $M_2$ consisting of the edges
$\{u_{i+1}, v_i\}$.
\item [3.] The perfect matching $M_3$ consisting of the edges
$\{u_{\pi(i)}, v_i\}$.
\end{enumerate}
\end{definition}

\begin{definition}
\label{def.augg} We construct a directed graph $G'(\pi)$ from
$G(\pi)$ by first directing each edge of $M_2\cup M_3$ from $V_2$
to $V_1$, and then contracting each edge of $M_1$.
\end{definition}

\begin{definition}(of set ${\mathcal A}$ of all acceptable permutations )
\label{def.acc} Permutation $\pi$ is acceptable ($\pi \in
{\mathcal A}$) if and only if the following two conditions are
satisfied:
\begin{enumerate}
\item [1.] The directed graph $G'(\pi)$ has a directed eulerian
closed walk where the edges of $M_2$ and $M_3$ alternate. This
condition is equivalent to saying that permutation $\pi$ describes
a rearrangement of the cards of $S^1$ into a sequence. \item [2.]
In the auxiliary graph $G(\pi)$, the union of the perfect
matchings $M_2\cup M_3$ contains many (at least $m/c$ where $c\geq
2$ is a small constant) cycles. This condition is added in order
to make the reconstruction of the input sequence hard; see the
sections below.
\end{enumerate}
\end{definition}

The following observation about the graph $G(\pi)$ will be used in
the analysis of the attacker problem.

\begin{observation}
\label{o.grat} Let $G(\pi)$ be as in Definition \ref{def.auxg}.
For $v_i \in V_2$ let $s(v)= P_i$ be its associated segment. Then
we have the following equality between cyclic sequences:
$$
P_1P_2P_3\ldots P_m= s(M_2(1))s(M_2(2))\ldots s(M_2(m)),
$$
where $M_2(i)$ denotes the vertex of $V_2$ connected with $u_i \in
V_1$ by an edge of $M_2$.
\end{observation}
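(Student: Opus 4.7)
The plan is to unpack Definition \ref{def.auxg} and simply read off what $M_2(i)$ is, after which the observation becomes a one-line consequence of the invariance of cyclic sequences under cyclic rotation.

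First I would identify $M_2(i)$ explicitly. By definition, $M_2$ consists of the edges $\{u_{i+1}, v_i\}$ for $i = 1, \ldots, m$ (indices taken modulo $m$, since we are working cyclically). Reindexing, the unique edge of $M_2$ incident to $u_i$ is the edge $\{u_i, v_{i-1}\}$, so $M_2(i) = v_{i-1}$. Consequently, using the convention $s(v_j) = P_j$, we get $s(M_2(i)) = P_{i-1}$ for every $i \in \{1,\ldots, m\}$, where index $0$ is understood as $m$.

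Next I would concatenate these values in the order $i = 1, 2, \ldots, m$ prescribed by the statement:
$$
s(M_2(1))\, s(M_2(2)) \cdots s(M_2(m)) \;=\; P_m P_1 P_2 \cdots P_{m-1}.
$$
This is exactly the cyclic rotation of $P_1 P_2 \cdots P_m$ by one block. Since the observation asserts equality only between cyclic sequences, and any cyclic rotation represents the same cyclic sequence, we conclude
$$
P_1 P_2 \cdots P_m \;=\; s(M_2(1))\, s(M_2(2)) \cdots s(M_2(m))
$$
as cyclic sequences, which is the required equality.

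There is no real obstacle here; the only thing to be careful about is the index bookkeeping (noting that $M_2$ is defined cyclically, so that $u_1$ is paired with $v_m$, not left unpaired) and emphasizing that the equality is asserted in the cyclic category, where the resulting shift by one position is invisible.
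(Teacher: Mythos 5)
Your proof is correct and matches what the paper intends: the paper states this as an unproved observation, and your direct index computation (showing $M_2(i)=v_{i-1}$, so the right-hand side is the one-step cyclic rotation $P_mP_1\cdots P_{m-1}$, which equals $P_1\cdots P_m$ as a cyclic sequence) is exactly the routine verification the authors leave implicit.
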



For illustration we perform $S^1$ on the output sequence of the
previous example (which would be the natural use of $S^1$, as
described later):

\

\begin{center}
 \fbox{\parbox{0.96\columnwidth}{{\em Example 2: Procedure
$S^1(\omega,lb,ub)$}

\

Input $\omega =$ '{\bf ingalgor pn infori formatifocealing paaper
is af this conconcea cithmthe aipis to pro pis pape in
presentesetion co ent an ilgogorithmap aim of tim}', parameters
$lb= 6$ and $ub= 8$.
\newline
First, the input sequence is partitioned randomly into blocks of
length $6,7$ or $8$. The blocks are divided by '$+$' below:
\newline
'{\bf ingalgo$+$r pn inf$+$ori for$+$matifo$+$cealing$+$ paape$+$r
is a$+$f this c$+$onconce$+$a cithm$+$the aipi$+$s to pro$+$ pis
p$+$ape in$+$ present$+$esetion $+$co ent $+$an
ilg$+$ogorith$+$map ai$+$m of tim$+$}'
\newline
Next we add overlap in front of each block. For procedure $S^1$
the overlap is always the whole preceding block. We get the
following cards; to make the example easier to understand we
indicate by '*' the division of each card into two blocks:
\newline
'{\bf m of tim*ingalgo$+$ingalgo*r pn inf$+$r pn inf*ori for$+$ori
for*matifo$+$matifo*cealing$+$cealing* paape$+$ paape*r is a$+$r
is a*f this c$+$f this c*onconce$+$onconce*a cithm$+$a cithm*the
aipi$+$the aipi*s to pro$+$s to pro* pis p$+$ pis p*ape in$+$ape
in* present$+$ present*esetion $+$esetion *co ent $+$co ent *an
ilg$+$an ilg*ogorith$+$ogorith*map ai$+$map ai*m of tim$+$}'
\newline
Finally the output is given by rearranging the cards by an
acceptable permutation, i.e. by a permutation whose corresponding
bipartite graph consists of a lot of cycles. The smallest length
of a cycle is $4$. It is not difficult to see that the following
permutation $\pi$ creates nine $4-$cycles and one $6-cycle$. In
the following description of $\pi$, the cycles are grouped
together; for instance the first $4-$cycle has edges $(v_1,
u_{10}), (v_9, u_2), (v_1, u_2), (v_9, u_{10})$. The first two of
them belong to perfect matching $M_3$, the last two belong to
perfect matching $M_2$.
\newline
$[\pi(1)= 10, \pi(9)= 2]; [\pi(2)= 6, \pi(5)= 3]; [\pi(3)= 9,
\pi(8)= 4]; [\pi(7)= 13, \pi(12)= 8]; [\pi(14)= 11, \pi(10)= 15];
[\pi(11)= 18, \pi(17)= 12]; [\pi(19)= 14, \pi(13)= 20]; [\pi(16)=
21, \pi(20)= 17];[\pi(15)= 19, \pi(18)= 16];[\pi(21)= 7, \pi(4)=
5, \pi(6)= 1]$.
\newline
Hence the final sequence (for ease of understanding we preserve
the separation symbols '*', which in reality would not be
present):.
\newline
'{\bf ingalgo*r pn inf paape*r is a pis p*ape inthe aipi*s to prof
this c*onconcer pn inf*ori foronconce*a cithm present*esetion  m
of tim*ingalgoa cithm*the aipian ilg*ogorithape in*
presentogorith*map aico ent *an ilgesetion *co ent s to pro* pis
pmap ai*m of timr is a*f this cmatifo*cealingori
for*matifocealing* paape}'

}}
\end{center}

\subsection{Procedure $S^{1+}(\omega,lb,ub)$}
\label{S1+}

If the input of the procedure $S^1$ comes from several runs of the
preparatory procedure $S$ described above, then we need to modify
$S^1$ in order to make its output generic, that is to
intentionally preserve the attacker-confusing overlaps. This
modified procedure is called $S^{1+}$.

We recall that $S^1$ repeats the whole blocks $P_i$, i.e. the
output of $S^1$ is the cyclic sequence
$$
P_1P_2P_{\pi(2)}P_{\pi(2)+ 1} \dots
P_{\pi^{-1}(1)-1}P_{\pi^{-1}(1)}.
$$

We assume that the input $\omega$ of $S^{1+}$ comes from repeated
runs of procedure $S$ and so $\omega$ contains a lot of segments
of length $o$ (the overlaps of runs of $S$) repeated at least
twice; let us denote by $R$ the set of all these segments.

Procedure $S^{1+}$ starts as $S^1$ by partitioning of $\omega$
into blocks
$$
P_1,P_2,\dots, P_m.
$$
The blocks of $S^{1+}$ cut some of the segments from $R$. To
reflect this, we write $P_i=r^T_{i-1}Q_ir^I_i$ where
\begin{itemize}
\item Segment $r^T_{i-1}$ is an empty segment or a terminal
segment of an element of $R$ cut by the partition between blocks
$P_{i-1}$ and $P_i$. \item Segment $r^I_i$ is an empty segment or
an initial segment of an element of $R$ cut by the partition
between blocks $P_i$ and $P_{i+1}$.
\end{itemize}
Summarizing this notation we write
$$
P_1P_2\dots P_m= Q_1r_1Q_2r_2Q_3r_3\dots r_{m-1}Q_mr_m,
$$
where each $r_i$ is such an element of $R$ that is cut by the
blocks of $S^{1+}$, or an empty segment. Each
$P_i=r^T_{i-1}Q_ir^I_i$ where $r_i=r^I_ir^T_i$.

The first difference of $S^1$ and $S^{1+}$ is that the overlaps of
$S^{1+}$ are not the whole preceding blocks. Instead, the overlap
added in front of block $P_{i+1}$ is $Q_ir^I_i$. Hence, block
$P_{i+1}$ with the overlap added in front of it has form
$Q_ir_iQ_{i+1}r^I_{i+1}$.

To make the cards of $S^{1+}$ more generic (see the same step in
the description of Procedure $S$), we change each such
$Q_ir_iQ_{i+1}r^I_{i+1}$ into $Q_ir_iQ_{i+1}r'_{i+1}$ where
$r'_{i+1}$ is obtained from $r^I_{i+1}$ by adding a segment so
that $r'_{i+1}$ has length $o$ and is repeated elsewhere in
$\omega$.

Summarising, the output of $S^{1+}$ has form
$$
Q_1*Q_2*Q_{\pi(2)}*Q_{\pi(2)+ 1}*\ldots
*Q_{\pi^{-1}(1)-1}*Q_{\pi^{-1}(1)}*,
$$
where each $*$ stands for a segment of length $o$ which is
repeated (at least) twice in this output, or the empty string.
More specifically, if $*$ follows segment $Q_i$ then it is equal
to $r_i$ or to $r'_i$.


\subsection{Procedure $S^2(\omega,o)$}
\label{sub.S2}

Let $S^2(\omega,o)$ be as follows: we assume its input is an
output of $S^1$, i.e. it is the cyclic sequence
$$
P_1P_2P_{\pi(2)}P_{\pi(2)+ 1} \dots
P_{\pi^{-1}(1)-1}P_{\pi^{-1}(1)}.
$$
Note that in this sequence, each block $P_i$ appears twice.
Procedure $S^2$ first {\em cuts} each $P_i$ randomly into $P_i^1,
P_i^2$ so that length of $P_i^1$ is at least $o$, i.e. the whole
overlap of length $o$, which we denote by $o_i$, is contained in
$P_i^1$. The trick of the concealing algorithm is that {\em both
copies of each $P_i$ are cut in the same way!} Let $o_iP_i^2$
denote $P_i^2$ with the added overlap.

For example, if $P_i$ is equal to 'abcdefghijkl' and $o=3$ then a
possible cut of $S^2$ is 'abcde$+$fghijkl'; $P_i^1$ is equal to
'abcde', $P_i^2$ is equal to 'fghijkl' and $o_iP_i^2$ is equal to
'cdefghijkl'.

We may describe the set of the cards of $S^2$ as the {\em
disjoint} union of two sets $C_1\cup C_2$, where
$$
C_1= \{o_1P_1^2P_2^1, o_2P_2^2P_3^1, \dots, o_mP_m^2P_1^1\}
$$
and
$$
C_2= \{o_1P_1^2P_{\pi(1)}^1, o_2P_2^2P_{\pi(2)}^1, \dots,
o_mP_m^2P_{\pi(m)}^1\}.
$$

We remark here that the cards of $C_1$ correspond to the edges of
perfect matching $M_2$ of graph $G(\pi)$ and the cards of $C_2$
correspond to the edges of perfect matching $M_3$ of $G(\pi)$ (see
Definition \ref{def.auxg}).

Finally $S^2$ arranges $C_1\cup C_2$ into a random cyclic
sequence.


For illustration we perform $S^2$ on the output sequence of the
previous example 2 (which would be the natural use of $S^2$, as
described later):

\

\begin{center}
 \fbox{\parbox{0.96\columnwidth}{{\em Example 3: Procedure
$S^2(\omega, o)$}

\

Input $\omega =$ '{\bf ingalgo*r pn inf paape*r is a pis p*ape
inthe aipi*s to prof this c*onconcer pn inf*ori foronconce*a cithm
present*esetion  m of tim*ingalgoa cithm*the aipian ilg*ogorithape
in* presentogorith*map aico ent *an ilgesetion *co ent s to pro*
pis pmap ai*m of timr is a*f this cmatifo*cealingori
for*matifocealing* paape}', parameter $o=3$.

A consistent partitioning into blocks is indicated below:
\newline
'{\bf inga$+$lgo*r pn i$+$nf pa$+$ape*r is$+$ a pis$+$ p*ape$+$
inthe a$+$ipi*s to $+$prof this $+$c*onco$+$ncer pn i$+$nf*ori$+$
foronco$+$nce*a ci$+$thm pre$+$sent*eseti$+$on  m of
$+$tim*inga$+$lgoa ci$+$thm*the a$+$ipian i$+$lg*ogori$+$thape$+$
in* pre$+$sentogori$+$th*map $+$aico e$+$nt *an i$+$lgeseti$+$on
*co e$+$nt s to $+$pro* pis$+$ pmap $+$ai*m of $+$timr is$+$ a*f
this $+$cmati$+$fo*ceal$+$ingori$+$ for*mati$+$foceal$+$ing*
pa$+$ape}'
\newline
Next we add overlap (of length $o$) in front of each block (and we
delete the 'helpful symbol' *):
\newline
'{\bf apeinga$+$ngalgor pn i$+$n inf pa$+$ paaper is$+$ is a
pis$+$pis pape$+$ape inthe a$+$e aipis to $+$to prof this $+$is
conco$+$nconcer pn i$+$n infori$+$ori foronco$+$nconcea ci$+$
cithm pre$+$presenteseti$+$etion  m of $+$of timinga$+$ngalgoa
ci$+$ cithmthe a$+$e aipian i$+$n ilgogori$+$orithape$+$ape in
pre$+$presentogori$+$orithmap $+$ap aico e$+$o ent an i$+$n
ilgeseti$+$etion co e$+$o ent  to $+$to pro pis$+$pis pmap $+$ap
aim of $+$of timr is$+$ is af this $+$is
cmati$+$atifoceal$+$ealingori$+$ori formati$+$atifoceal$+$ealing
pa$+$ paape}'
\newline
Finally we rearrange the cards in a random order. The resulting
sequence is as follows:
\newline
'{\bf n inforio ent s to ori formati paapen ilgesetipis papen
ilgogoringalgor pn iapeingaof timr is is af this presentesetin inf
paealingoriealing papresentogorietion  m of atifocealap aim of
ngalgoa cie aipisan iof timingaatifocealis cmatipis pmap
orithapeis concoori foroncoto pro pise aipis to  paaper isnconcer
pn ietion co e is a pis cithm preo ent an ito prof this nconcea
ciap aico eape inthe aorithmap  cithmthe aape in pre}'

}}
\end{center}

\subsection{Procedure $S^{2+}(\omega,o)$}
\label{sub.S2+}

We assume its input is an output of $S^{1+}$. This procedure is
defined analogously as $S^2$ with the only difference that the
{\em cuts} are performed to segments $Q_i$ instead of segments
$P_i$.

\section{The concealing algorithm}
\label{s.ca}

Let the input string be $\omega$, and the length of the preserved
segments be $k$. We consider two scenarios, {\em weak concealing}
and {\em strong concealing}, depending on the nature of the input.
We perform the {\em weak concealing algorithm} if the input is
nonspecific, i.e., short segments have many possible alternative
prolongations, or there does not exist any outside knowledge about
the likelihood of presence of some segments in the input (e.g. an
English text).

The {\em weak concealing algorithm} may be described as
$$
\omega_F= S^2(S^1(\omega,3k/2,2k),k-1).
$$
We choose to have the block length in $S^1$
 longer and to overlap the whole blocks in $S^1$ since we want to ensure that the {\em cuts} of $S^2$
may be done in the same way in each of the two copies of the
blocks $P_i$.

The {\em strong concealing algorithm} may be written as

$$
\omega_F= S^{2+}(S^{1+}(S \ldots S(\omega,k-1,k,3k/2))),3k/2,2k),k-1),
$$
where the number of repetitions of procedure $S$ is application
specific.

\section{Analysis of the concealing algorithm} 
\label{S_Properties}

\begin{observation}
\label{o.kpres} The concealing algorithm preserves all segments of
length $k$ present in the input sequence $\omega$ within the
output sequence $\omega_F$.
\end{observation}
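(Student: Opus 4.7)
The plan is to verify, procedure by procedure, that every $k$-segment of the input appears as a contiguous substring of some card of that procedure; since the cards are concatenated to form the output, this gives preservation within one procedure, and the statement for the full composition follows by a straightforward induction on the number of procedures. Because each procedure cuts the cyclic input into blocks and prepends an overlap to each block, the only way a $k$-segment could be lost is if it straddles a block boundary without being covered by the block together with its overlap. This suggests a uniform case analysis on the position of a $k$-segment relative to the blocks.

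For the basic procedure $S(\omega, k-1, k, 3k/2)$ the argument is transparent. A $k$-segment of the input either lies inside a single block $P_i$, in which case it survives inside the card derived from $P_i$, or it straddles a unique boundary $P_i \mid P_{i+1}$ with at most $k-1$ symbols in $P_i$; in the latter case the overlap of length $o = k-1$ prepended to $P_{i+1}$ supplies the required suffix of $P_i$, so the segment fits inside the card built from $P_{i+1}$. Since $lb \geq k$, the segment cannot span three consecutive blocks, so these are the only cases; the optional dust appended after a block does not interfere. For $S^1$ and $S^{1+}$ the overlap in front of each block is (essentially) the whole preceding block, whose length is at least $3k/2 > k-1$, so the same dichotomy applies. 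The one subtlety is the right-end modification $r^I_{i+1} \to r'_{i+1}$ in $S^{1+}$; but any $k$-segment straddling the end of $P_{i+1}$ in $\omega$ is recaptured by the middle portion $Q_{i+1} r_{i+1} Q_{i+2}$ of the next card, where the repeat element $r_{i+1} = r^I_{i+1} r^T_{i+1}$ appears intact.

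The main obstacle is procedure $S^2$ (and analogously $S^{2+}$), which introduces an additional cut $P_i^1 \mid P_i^2$ inside each block, so there are now two kinds of boundaries to consider. I would split into subcases according to whether the $k$-segment lies inside a single sub-block, straddles the internal cut $P_i^1 \mid P_i^2$, or straddles an original boundary $P_i^2 \mid P_{\mathrm{next}}^1$. In each of these subcases, the relevant card is one of those listed in $C_1 \cup C_2$, and the overlap $o_i$ of length $k-1$ is exactly enough to recover any portion of the segment falling into $P_i^1$. The only case requiring a short argument rather than direct containment is a segment that would reach from $P_i^2$, through all of $P_{i+1}^1$, and into $P_{i+1}^2$; a symbol count shows this forces $|P_{i+1}^1| \leq k-2$, contradicting the requirement $|P_i^1| \geq o = k-1$ built into the definition of $S^2$, so such a segment cannot exist. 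With this nonexistence claim in place the case analysis closes, and the statement for both the weak and the strong concealing algorithms follows by induction on the procedure composition.
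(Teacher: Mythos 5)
Your proposal is correct and follows essentially the same route as the paper: the paper's entire proof is the one-sentence remark that whenever any procedure cuts the string, an overlap of length at least $k-1$ is prepended to the piece following the cut, so no $k$-segment spanning a cut is lost. Your case analysis (in particular the check that a $k$-segment cannot straddle two successive cuts of $S^2$ because every $|P_j^1|\geq k-1$) simply makes explicit the details the paper leaves implicit.
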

This observation is straightforward as whenever any of the above
procedures cuts the input string, an overlap of length at least $k-1$ is
added in front of the segment following the cut, thus preserving
all subsegments of length $k$ which would otherwise be partitioned
by the cut.

It is also straightforward that both weak and strong concealing
algorithms are linear in $|\omega|$ if we have
\begin{itemize}
\item Access to a generator of random permutations of the numbers
less than $|\omega|$, \item Access to a generator of random
elements of ${\mathcal A}$ (see Definition \ref{def.acc}).
\end{itemize}

A random permutation may be generated in linear time (see
\cite{KN}). We will not discuss the complexity of generating
random elements of ${\mathcal A}$. Instead, we specify a large
subset $\B$ of ${\mathcal A}$ such that generating a random
element of $\B$ may be reduced to generating a random permutation
of a number less than $|\omega|$.

Each element of $\B$ may be constructed as follows: we take any
permutation $\pi$ of $m/2$ (we assume $m$ is even) and we consider
the pairing $P(\pi)$ of $\{1,2,\ldots m\}$ given by
$(1,\pi(1)+m/2),\ldots (m/2,\pi(m/2)+ m/2)$. This pairing may be
looked at as an involution $i(\pi)$ (a permutation $\a$ is
involution if $\a(\a(x))= x$ for each $x$) on $m$. Finally, we get
element $\b= \b(\pi)$ of $\B$ by shifting $i(\pi)$ by $1$, i.e., by
letting $\b(a)= i(\pi)(a)+1$ modulo $m$; we have an additional condition that
$O^j(1)\neq 1$ for $j<m$ and $O(a)= \b(a)+1$. This condition makes sure that 
the first condition of the definition of the acceptable permutation is satisfied.
The following observation
is straightforward.

\begin{observation}
\label{o.expp}
$$
|\B|\leq (m/2-1)!
$$
Further, the graphs defined by a permutation from $\B$ are
disjoint unions of $m/2$ cycles of length $4$. Generating a random
element from $\B$ is as hard as choosing a random permutation of
$m/2$.
\end{observation}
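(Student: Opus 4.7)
The plan is to handle the three assertions in order of increasing subtlety: first the structural claim about $G(\pi)$, then the cardinality bound, and finally the sampling equivalence.

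For the cycle structure, since $M_2$ and $M_3$ are perfect matchings, every vertex of $G(\pi)$ has degree $2$ in $M_2 \cup M_3$, so this subgraph is a disjoint union of even cycles. Starting from any $v_i \in V_2$ and alternating $M_2, M_3, M_2, M_3$ traces the walk $v_i \to u_{i+1} \to v_{\pi^{-1}(i+1)} \to u_{\pi^{-1}(i+1)+1} \to v_{\pi^{-1}(\pi^{-1}(i+1)+1)}$. Writing $\ell = \pi^{-1}(i+1)$, closure in four steps is the condition $\pi(i) = \ell + 1$, equivalently $\pi(\pi(i)-1) = i+1$. Substituting the construction $\pi(a) = i(\sigma)(a) + 1 \bmod m$, where $\sigma \in S_{m/2}$ is the starting permutation and $i(\sigma)$ its associated fixed-point-free involution on $\{1,\dots,m\}$, this reduces to $i(\sigma)(i(\sigma)(i)) = i$, which holds because $i(\sigma)$ is an involution. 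Hence every cycle has length exactly $4$, and counting vertices yields $m/2$ cycles.

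For the cardinality bound, the key step is that the map $\sigma \mapsto \b(\sigma)$ is injective on $S_{m/2}$: from $\b$ one recovers $i(\sigma)(a) = \b(a)-1 \bmod m$, and $\sigma(a) = i(\sigma)(a) - m/2$ for $a \le m/2$. Hence $|\B|$ is bounded by the number of $\sigma \in S_{m/2}$ for which the associated $O(a) = \b(\sigma)(a)+1$ is a single $m$-cycle. The trivial bound is $(m/2)!$; to sharpen to $(m/2-1)!$ one exhibits a free $\BZ/(m/2)$-action on the set of valid $\sigma$'s induced by the cyclic freedom in choosing the base point at which the $m$-cycle $O$ is indexed. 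All elements of an orbit of this action produce the same $\pi$ up to a known equivalence, so at most $(m/2)!/(m/2) = (m/2-1)!$ distinct elements remain in $\B$.

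For the sampling equivalence, one direction is explicit from the construction: generate $\sigma$ uniformly at random in $S_{m/2}$ and compute $\b(\sigma)$, rejecting on failure of the orbit condition. The reverse reduction uses the injective recovery of $\sigma$ from $\b$ established above: a uniform $\pi \in \B$ produces a uniform $\sigma \in S_{m/2}$ modulo the $\BZ/(m/2)$-symmetry, which is efficiently invertible. Both reductions run in linear time modulo one call of a uniform-random-permutation oracle, establishing the claimed equivalence.

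The main obstacle is the second step: the injection immediately gives $|\B| \le (m/2)!$, but the sharp factor-of-$m/2$ improvement requires pinpointing the precise rotational redundancy in the construction — namely, that cyclically re-indexing the $m$-cycle $O$ corresponds to $m/2$ distinct starting permutations $\sigma$ that yield equivalent concealing data. Verifying that this action is genuinely free on the set of valid $\sigma$'s, rather than producing fixed points under some symmetric $\sigma$, is the delicate point.
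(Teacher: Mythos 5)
The paper supplies no proof of this observation whatsoever---it is introduced with ``The following observation is straightforward''---so your argument has to stand entirely on its own. Your first part does: every vertex has degree $2$ in $M_2\cup M_3$, and substituting $\pi(a)=i(\sigma)(a)+1 \bmod m$ into the four-step closure condition $\pi(\pi(i)-1)=i+1$ reduces it to $i(\sigma)(i(\sigma)(i))=i$, which is the involution property; together with the fact that $i(\sigma)$ is fixed-point free (so no cycle closes after two edges) this correctly yields $m/2$ cycles of length $4$. The sampling reduction in your third part is also the natural one, modulo the caveat below.

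The gap is exactly where you flagged it, and the mechanism you propose cannot close it. You first establish that $\sigma\mapsto\b(\sigma)$ is injective, and then invoke a free $\BZ/(m/2)$-action on the valid $\sigma$'s whose orbits ``produce the same $\pi$ up to a known equivalence,'' dividing by the orbit size to pass from $(m/2)!$ to $(m/2-1)!$. These two steps are incompatible: $\B$ is a set of permutations of $\{1,\dots,m\}$, not of equivalence classes, so by your own injectivity an orbit of size $m/2$ contributes $m/2$ \emph{distinct} elements of $\B$ and no factor is saved. What the bound actually requires is that at most a $2/m$ fraction of the $(m/2)!$ starting permutations $\sigma$ make $O(a)=i(\sigma)(a)+2$ a single $m$-cycle; neither you nor the paper proves this, and the heuristic ``a uniform permutation of $n$ elements is an $n$-cycle with probability $1/n$'' does not apply, since $O$ ranges only over the constrained family $\rho^2\circ i(\sigma)$ with $\rho$ the cyclic shift. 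A sign computation shows how delicate this is: $\sgn(O)=\sgn(i(\sigma))=(-1)^{m/2}$ while an $m$-cycle has sign $-1$, so for $m\equiv 0\pmod 4$ the set $\B$ is empty---the stated bound then holds vacuously, but your rejection sampler never accepts and the equivalence-of-sampling claim degenerates. Small check cases ($m=6$ gives $|\B|=1\leq 2$) confirm the inequality is plausible for $m\equiv 2\pmod 4$, but a proof of the factor $(m/2)$ saving is still missing from both your argument and the paper.
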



The following observation is also straightforward.

\begin{observation}
\label{o.len} The length of the output of each of the  procedures 
applied to input $\omega$ is linear in $|\omega|$. For example, for $S$ and $S^1$ it is 
$2|\omega|$.
\end{observation}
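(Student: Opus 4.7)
The plan is to treat each of the five procedures separately and show that in every case the output is obtained from the input by concatenating a number of cards whose total length can be controlled by a straightforward additive count (overlap contribution, block contribution, and, for $S$ only, dust contribution). Since the key parameters $o$, $lb$, $ub$ and $k$ are constants (independent of $|\omega|$), the number $m$ of blocks produced when partitioning $\omega$ satisfies $m\le |\omega|/lb$, so any quantity of the form $cm$ for constant $c$ is automatically linear in $|\omega|$. This remark underlies every case.

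I would first handle $S^1$, where the claim is exact. By definition, the output is the cyclic sequence $P_1P_2P_{\pi(2)}P_{\pi(2)+1}\ldots P_{\pi^{-1}(1)-1}P_{\pi^{-1}(1)}$, in which each block $P_i$ of the partition occurs exactly twice (once as the terminal half of one card, once as the initial half of another). Hence the total length equals $2\sum_i|P_i|=2|\omega|$. For $S$, the output consists of $m$ cards, each consisting of an overlap of length $o$, a block of length in $[lb,ub]$, and a dust segment whose expected length is $(lb+ub)/2-o$ (by design). Thus the expected length of a single card is $o+(lb+ub)/2+((lb+ub)/2-o)=lb+ub$, while the expected block length $(lb+ub)/2$ gives $m=2|\omega|/(lb+ub)$ in expectation, so the expected total output length equals $m(lb+ub)=2|\omega|$, and worst-case it stays linear since each card has length at most $o+ub+(ub-o)=2ub$.

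For the remaining procedures I would reduce to these two computations. For $S^{1+}$ the output has the shape $Q_1*Q_2*Q_{\pi(2)}*\ldots*Q_{\pi^{-1}(1)}*$, where each $Q_i$ appears exactly twice and each $*$ is a segment of length at most $o$; the total length is therefore at most $2\sum_i|Q_i|+2mo\le 2|\omega|+2mo$, which is linear. For $S^2$, the input is itself of length $2|\omega|$ (coming from $S^1$), each $P_i$ is cut (once, but used twice) into $P_i^1,P_i^2$, and the output consists of the $2m$ cards of $C_1\cup C_2$; summing over $C_1$ gives $\sum_i(o+|P_i^2|+|P_{i+1}^1|)=mo+|\omega|$, and the same for $C_2$, so the total output length is $2|\omega|+2mo$, again linear. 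The bound for $S^{2+}$ is identical after replacing $P_i$ by $Q_i$, using the length estimate already obtained for $S^{1+}$.

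There is no genuine obstacle in the argument — the whole observation is bookkeeping — but the one point that needs care, and that I would highlight in the write-up, is that the equality $2|\omega|$ for $S$ is an expected value over the random block and dust lengths, not a pointwise identity; the deterministic statement is the weaker bound $|\omega_F|\le 2ub\cdot m\le (2ub/lb)|\omega|$. Conversely, the $2|\omega|$ statement for $S^1$ is pointwise and combinatorial, and this contrast is worth making explicit so the reader does not conflate the two cases.
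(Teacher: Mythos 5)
Your proposal is correct, and it takes the only natural route: the paper itself offers no proof beyond declaring the observation straightforward, and your card-by-card length count (each block of $S^1$ used exactly twice, giving exactly $2|\omega|$; overlap plus block plus dust summing to $2|\omega|$ in expectation for $S$; the analogous bounds $2|\omega|+O(m\cdot o)$ for $S^{1+}$, $S^2$, $S^{2+}$) is precisely the intended bookkeeping. Your explicit caveat that the $2|\omega|$ figure for $S$ is an expectation over the random block and dust lengths, whereas for $S^1$ it is a pointwise identity, is a worthwhile clarification that the paper glosses over.
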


\section{Hardness of the attacker problem}
\label{s.hard} We recall that the attacker problem introduced in
Section \ref{sec.concprob} (see also Proposition \ref{p.cons})
reads:

{\em How much information about $\omega$ can an attacker deduce
from $\omega_F$, $|\omega|$, $k$ and the knowledge of the
concealing algorithm?}

For instance, the attacker can try to get all the overlaps of
$S^2$ since {\em assuming $\omega_F$ has no accidental repeats}
these overlaps appear exactly four times in $\omega_F$ and no
other segment is like that. The attacker may partition $\omega_F$
into cards as indicated by all these overlaps. She gets a
collection of cards, with $(k-1)$-length segments marked in the
beginning and the end of each card. The attacker wants to overlap
these marked segments. Depending on whether {\it $\omega_F$ has
accidental repeats}, the attacker possibly {\em cuts} in more
places than were the original cards used in the algorithm. Hence,
in her collection of cards some overlaps should not have been
considered, and some segments have overlaps with more than one
other card. These considerations naturally specify the {\bf
domino} and {\bf donkey} problems.

In more realistic situation the attacker does not know the correct
list of cards of $S^2$ and hence she needs to choose which
$4-$repeats to ignore. We may assume that she has some hints as to
which overlaps are 'likely' ok. This is the situation we model by
the following problem.

{\bf Shortest domino row problem (SDRP).} Assume we are given a
collection of dominoes (domino will mean a rectangle partitioned
vertically into two squares, where one is initial and the other
one is terminal), and we are also given a graph on the squares.
This graph should be interpreted as the graph of hints. We want to
put all the dominoes into a row, so that if two consecutive
squares are connected by an edge of the graph, we can put one
square on top of the other (i.e., identify them). The aim is to
make the resulting row as short as possible, i.e. to satisfy as
many hints as possible.

Let us define the (de Bruijn-type) graph $G= (V,E)$ where $V$ is
the set of all the squares, and $E$ is the set of the dominoes:
edge $e_i$ connects the squares of domino $Q_i$. The following
observation is straightforward.

\begin{observation}
\label{o.eut} There is a natural bijection between the set of the
Euler circuits (eulerian closed walks) of $G$ and the set of all
the circular sequences consistent with the overlapping dominoes
$Q_1,\dots, Q_m$.
\end{observation}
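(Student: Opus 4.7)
The plan is to exhibit the bijection explicitly and verify its two directions are mutual inverses; since the statement is essentially a translation between two equivalent combinatorial objects, the argument should be short.

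To map Euler circuits to circular domino sequences, I would proceed as follows. Given an Euler circuit $v_0 \to v_1 \to \dots \to v_m = v_0$ of $G$, record the sequence of dominoes $Q_{i_1}, Q_{i_2}, \dots, Q_{i_m}$ whose edges are traversed, in order. Since consecutive steps in the walk share the intermediate vertex, the dominoes $Q_{i_j}$ and $Q_{i_{j+1}}$ share the square $v_j$ (the terminal square of $Q_{i_j}$ and the initial square of $Q_{i_{j+1}}$), so the resulting cyclic sequence is consistent with the overlapping condition required in the SDRP.

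Conversely, starting from a circular sequence of overlapping dominoes $Q_{i_1}, \dots, Q_{i_m}$, I would read off the sequence of squares visited: each domino contributes the transition between its initial and terminal squares, and the overlap condition guarantees that consecutive transitions meet at a common square. The resulting sequence of squares is a closed walk in $G$, and because each domino appears exactly once and the edges of $G$ are in bijection with the dominoes, each edge is used exactly once. Hence the walk is an Euler circuit.

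The two constructions are manifestly mutually inverse, being nothing more than the translation between the ``list of edges'' and ``list of vertices'' descriptions of the same closed walk. I foresee no genuine obstacle; the only bookkeeping point is that both sides must be regarded up to cyclic rotation of the starting index, which matches the ``circular sequence'' and ``Euler circuit'' language used in the statement. If one wishes to respect the initial/terminal orientation of the dominoes, one should interpret $G$ as a directed multigraph (each $Q_i$ contributing a directed edge from its initial square to its terminal square) and restrict to directed Euler circuits, after which the same argument goes through verbatim.
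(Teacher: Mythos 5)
Your proposal is correct and is exactly the standard de Bruijn-graph correspondence that the paper itself invokes (the paper states the observation without proof, calling it straightforward, and points to the edge-sequence/vertex-sequence translation already sketched in its Section 2.2). Your added remark about directing each edge from the initial to the terminal square and taking directed Euler circuits is the right way to make the paper's informal definition of $G$ precise.
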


\begin{theorem}
\label{thm.dom} The SDRP is search-NP-complete.
\end{theorem}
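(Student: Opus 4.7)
The plan is to exhibit a polynomial-time, parsimonious reduction from \textsc{Directed Hamiltonian Path}. First, SDRP is clearly in search-NP: a witness is a permutation of the $m$ dominoes, and the length of the resulting row is $2m$ minus the number of adjacent (terminal, initial) square pairs connected in the hint graph, which is polynomial-time checkable.

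For the hardness direction, given a directed graph $H = (V,E)$ on $n$ vertices I would build the SDRP instance as follows: for each vertex $v$ introduce a single domino $D_v$ whose initial square is $v^I$ and whose terminal square is $v^T$, and put a hint-graph edge $\{v^T, u^I\}$ for every arc $(v,u) \in E$, with no other hint edges. Set the target length $L = n+1$. For any ordering $D_{v_1} D_{v_2} \cdots D_{v_n}$ of all dominoes, the overlap between $D_{v_i}$ and $D_{v_{i+1}}$ is allowed precisely when $(v_i, v_{i+1}) \in E$. Since the total length equals $2n$ minus the number of realized overlaps and a row of $n$ dominoes admits at most $n-1$ overlap positions, length at most $n+1$ is attained iff every consecutive pair overlaps, iff $v_1 v_2 \cdots v_n$ is a Hamiltonian path of $H$. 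This yields a bijection between Hamiltonian paths of $H$ and optimal SDRP arrangements, so the reduction is parsimonious and transfers search-NP-completeness from Hamiltonian Path to SDRP.

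The main obstacle I anticipate is essentially bookkeeping: picking the correct variant of SDRP. Observation~\ref{o.eut} is phrased for \emph{circular} sequences and Eulerian circuits, whereas SDRP as stated in the text speaks of a linear row. If the circular version is intended, I would apply the identical construction but reduce from \textsc{Directed Hamiltonian Cycle}, taking $L = n$ so that all $n$ cyclic boundaries are forced to overlap. A secondary subtlety is that the hint graph lives on squares rather than on labels; keeping the $2n$ squares $\{v^I, v^T\}_{v \in V}$ as distinct combinatorial objects, as the construction does automatically, removes any ambiguity and ensures that no unintended overlaps are legal beyond those encoding the arcs of $H$.
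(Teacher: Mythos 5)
Your proof is correct, but it takes a genuinely different route from the paper's. The paper builds no reduction of its own: using the de Bruijn correspondence of Observation~\ref{o.eut} it recasts SDRP as finding an Euler circuit that realizes as many recommended transitions between consecutive edges as possible, notes that the special case with forbidden transitions is the Eulerian-circuit-with-forbidden-transitions problem already known to be NP-complete (\cite{GJ}), and then supplies the search-to-decision argument explicitly (run a hypothetical polynomial search algorithm for its time bound $n^{10}$ and answer NO if it fails to produce a circuit). Your direct reduction from Directed Hamiltonian Path is self-contained and more elementary: one domino per vertex, hint edges encoding the arcs, and a clean bijection between rows of length $n+1$ and Hamiltonian paths, which makes the parsimony claim transparent and avoids reliance on the external citation. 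What the paper's route buys is that its hard instances stay inside the de Bruijn framework that actually arises from the attacker problem (squares shared among several dominoes, hints forming a subgraph of the equality relation on segments), so the hardness speaks directly to the instances the attacker faces; your instances are degenerate from that viewpoint (each square occurs in exactly one domino, and realizing the hints as equalities of string segments pushes many squares to carry identical labels), though they are perfectly legitimate for SDRP as formally stated. Two minor points: you should spell out the search-to-decision step rather than only asserting that a parsimonious reduction ``transfers'' search-hardness, and your observation about the linear-versus-circular ambiguity (switching to Hamiltonian Cycle with target length $n$) addresses a genuine imprecision in the problem statement that the paper itself glosses over.
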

\begin{proof}
Assume that in the auxiliary graph, there is edge between two
squares if they are equal, but not all such edges are there. This
is exactly consistent with our interpretation. Now, in the
reformulation with the de Bruijn graph and the Euler circuit, this
corresponds to the problem that we are given a graph, with some
transitions between neighboring edges recommended, and we want to
find an Euler circuit with as many recommended transitions as
possible. A particular instance is that some transitions are
forbidden, and we want to find out whether Euler circuit where all
the transitions are allowed exists. This is known to be NP
complete (\cite{GJ}).

We have in fact a {\it search instance} of this problem: we know
that such an Euler circuit exists, and we want to find it. There
is a standard trick which shows that the decision problem is
polynomial if the search problem is polynomial:

Assume there is a polynomial algorithm $A$ that solves the search
version, and let its running time be $n^{10}$, say. To solve the
decision problem, we apply $A$ to an input. It either finds the
right Euler circuit and then the answer is YES, or it runs longer
than $n^{10}$, and then the answer is NO.
\end{proof}

In the {\bf donkey problem} we assume that {\it $\omega_F$ has no
accidental repeats}. What the attacker gets? There are two
versions of the algorithm. Let us first consider the {\it strong
concealing} where the preliminary step is performed.

\begin{enumerate}
\item[1.] As described above, using the $4-$repeats of length
$k-1$ of $\omega_F$, the attacker gets the cards of $S^{2+}$, i.e.
$C_1\cup C_2$, where
$$
C_1= \{o_1Q_1^2r_1Q_2^1, o_2Q_2^2r_2Q_3^1, \dots,
o_mQ_m^2r_mQ_1^1\}
$$
and
$$
C_2= \{o_1Q_1^2r'_1Q_{\pi(1)}^1, o_2Q_2^2r'_2Q_{\pi(2)}^1, \dots,
o_mQ_m^2r'_mQ_{\pi(m)}^1\}.
$$
\item[2.] The attacker also gets each $Q_i^1$ and each $o_iQ_i^2$
since
   these are exactly maximal initial and terminal segments of the cards above
which are repeated twice in $\omega_F$. \item[3.]
 By matching the overlaps, the attacker gets each pair $Q_i^1Q_i^2$ since
the overlap $o_i$
   in $o_iQ_i^2$ is a terminal segment of $Q_i^1$ and we may assume that
   these cannot be misinterpreted.
\item[4.] What the attacker gets from the initial applications of
procedure $S$? Each of their overlaps (of length $k-1$) appears at
least twice in the input of $S^{1+}$. Moreover most of the {\em
cuts} of the procedures $S$ are different. Let us recall here that
among these overlaps may be also the dust. Procedures $S^{1+}$ and
$S^2$ {\em cut} into some of these. Those cut will remain
2-repeats, those not cut may gain repeats. Moreover, $S^{1+}$
introduces dust in the border of each card: this adds 2-repeats of
strings of length $k-1$ undistinguishable from the 2-repeats
coming from initial procedures $S$.

\end{enumerate}

In case weak concealing algorithm is applied, the attacker has
$1.,2.,3.$ where $Q_i^2r_i$ and $Q_i^2r'_i$ are replaced by
$P^2_i$ and $Q^1_i$ is replaced by $P^1_i$.

The next proposition summarises the possible types of repeats
introduced by the algorithm.

\begin{proposition}
\label{p.corr} All the repeats of $\omega_F$ generated by the weak
or strong concealing algorithm are those described in
$1.,2.,3.,4.$.
\end{proposition}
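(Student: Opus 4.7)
The plan is to treat the claim as an inventory theorem proved by induction on the composition of procedures, tracking at each stage the complete list of equal substrings that appear in the output. Throughout I would work under the standing assumption (stated at the start of the donkey-problem discussion) that $\omega_F$ has no accidental repeats, i.e., no two segments longer than $k-1$ agree by chance; otherwise the statement has to be read as ``modulo accidental coincidences.'' The invariant I would maintain is a finite list of segment-types, each annotated with the exact multiplicity with which it occurs.

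First I would handle the weak algorithm. For $S^1(\omega,3k/2,2k)$, the output is the cyclic concatenation of the cards $P_{i-1}P_i$ in the order dictated by the acceptable permutation $\pi$. Because the acceptance condition on $\pi$ rearranges these cards into an Eulerian walk on $G'(\pi)$, every block $P_i$ appears exactly twice in the output: once as the terminal half of one card and once as the initial half of the next. No maximal repeat longer than $|P_i|$ can occur without an accidental coincidence, because the two contexts of $P_i$ are dictated by disjoint matchings $M_2$ and $M_3$ of $G(\pi)$, and by Observation \ref{o.grat} the contexts around the two copies are in general distinct. Applying $S^2$ on top then splits each $P_i=P_i^1P_i^2$ consistently in both copies and prepends the overlap $o_i$ of length $k-1$ to every $P_i^2$; this forces $o_i$ to appear exactly four times (twice as a suffix of $P_i^1$ and twice as the overlap of $o_iP_i^2$), forces each $Q_i^1$ and each $o_iQ_i^2$ to appear exactly twice as maximal end-segments of cards, and forces $Q_i^1Q_i^2=P_i$ to reappear as a consequence of the consistent cut. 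These are precisely items 1., 2., 3. For the weak algorithm this accounts for every repeat, since any further repeat would have to cross a card boundary in the final shuffling, which would be an accidental coincidence.

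For the strong algorithm I would proceed by induction on the number of preparatory $S$-rounds. Each application of $S$ creates a set $R$ of overlaps of length $k-1$, each appearing at least twice, and (if dust is used) a further family of length-$k-1$ dust segments; these are the 2-repeats referenced in item 4. Subsequent $S$-rounds may cut some of these and leave others intact. Then $S^{1+}$ and $S^{2+}$ operate on the segments $Q_i$ rather than $P_i$, and by construction preserve the distinction between elements of $R$ that were cut (which survive as 2-repeats) and those that were not (which either remain or pick up additional copies through the overlap mechanism). The new overlaps $r_i,r_i'$ introduced by $S^{1+}$ are, by design, length-$k-1$ segments already repeated elsewhere in $\omega$, so they too fall under item 4. and are indistinguishable in type from the earlier $S$-overlaps. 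Combining with the analysis above of $S^{1+}\to S^{2+}$ yields exactly items 1.--4.

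The step I expect to be the main obstacle is the boundary analysis: showing that the concatenation of cards in the final cyclic output of $S^2$ or $S^{2+}$ does not inadvertently create longer repeats that are not of the listed types. This requires arguing that the random shuffling in step 4 of each procedure, together with the acceptability of $\pi$ (ensuring many short cycles in $M_2\cup M_3$), makes the probability of a cross-card coincidence negligible, and that the dust insertion in $S^{1+}$ genuinely separates overlaps that would otherwise neighbor each other. In effect, the obstacle is to formalize the ``no accidental repeats'' hypothesis into a clean probabilistic statement so that the inventory in 1.--4. can be claimed to be exhaustive rather than merely containing all the deterministic repeats; I would either state this as a conditional proposition given the hypothesis, or sketch a union-bound estimate over pairs of card boundaries to show the expected number of accidental longer repeats is $o(1)$ for the parameter choices used.
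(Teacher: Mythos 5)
The paper states Proposition~\ref{p.corr} with no proof at all --- it is asserted immediately after the enumeration $1.$--$4.$ and the text moves straight on to Corollary~\ref{p.conc} --- so there is no argument of the authors' to compare yours against; your inventory-by-induction sketch supplies something the paper omits. Your outline is consistent with the surrounding discussion: the paper derives items $1.$--$3.$ exactly as you do, from the fact that each $P_i$ occurs twice in the output of $S^1$, that $S^2$ cuts both copies identically, and that the length-$(k-1)$ overlaps $o_i$ therefore occur exactly four times; and item $4.$ is the paper's own accounting of the $2$-repeats left over from the preparatory runs of $S$ and the dust of $S^{1+}$, which matches your inductive step. One notational slip: in your weak-algorithm paragraph you write $Q_i^1$ and $o_iQ_i^2$ where you mean $P_i^1$ and $o_iP_i^2$ (the $Q_i$ only exist in the strong algorithm; the paper itself notes the substitution $Q_i^1\mapsto P_i^1$, $Q_i^2r_i\mapsto P_i^2$ for the weak case).

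The obstacle you flag --- ruling out repeats created accidentally across card boundaries by the final shuffle --- is real, and you should be aware that the paper does not close it either: it is handled purely by fiat. The donkey-problem analysis opens with the standing assumption that ``$\omega_F$ has no accidental repeats,'' and the case where accidental repeats do occur is split off into a separate problem (the SDRP/domino problem of Theorem~\ref{thm.dom}) rather than being shown to be improbable. So your proposed resolution --- either stating the proposition conditionally on that hypothesis, or proving a union-bound estimate that accidental coincidences have probability $o(1)$ --- would in fact be strictly more than the paper delivers. If you carry out the union bound, note that the relevant events are coincidences of segments of length at least $k$ across the $O(m^2)$ pairs of card-boundary neighborhoods, and that the bound degrades for highly repetitive inputs $\omega$ (the paper concedes this implicitly when it discusses ``restrictive'' input sequences in Section~\ref{sec.concprob}), so a conditional formulation is probably the cleaner choice.
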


\begin{corollary}
\label{p.conc} All the useful information for the attacker problem
is $|\omega|$, $k$, and $1.,2.,3.,4.$.
\end{corollary}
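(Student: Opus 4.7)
The plan is to combine Proposition \ref{p.cons} (the consistency assumption) with Proposition \ref{p.corr} by separating the random and information-bearing content of $\omega_F$.

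First, I would invoke Proposition \ref{p.cons} to fix the attacker's full input as $\omega_F$, $|\omega|$, $k$, and the description of the concealing algorithm. Two of these, $|\omega|$ and $k$, already appear in the statement to be proved, and the algorithm itself is fixed independently of $\omega$. It therefore remains to argue that the only part of $\omega_F$ that carries information about $\omega$ is precisely the data recorded in items 1.-4.

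Next, I would separate $\omega_F$ into its random and structural content. Each of the procedures $S$, $S^1$, $S^{1+}$, $S^2$, $S^{2+}$ samples uniformly from an explicit, $\omega$-independent set: the block lengths are drawn from $[lb,ub]$, and the cyclic orderings of cards are drawn either from the full symmetric group (in $S$) or from the set $\mathcal{A}$ of acceptable permutations, concretely from the subset $\mathcal{B}$ of Observation \ref{o.expp}. Since these choices depend only on the numbers of blocks involved and not on the content of the blocks, the linear order in which the final cards appear in $\omega_F$ is, conditional on the multiset of cards, uniformly random and independent of $\omega$. Consequently, the only component of $\omega_F$ that carries information about $\omega$ is the pattern of substring coincidences, i.e.\ the repeat structure of $\omega_F$.

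Finally, I would apply Proposition \ref{p.corr} to conclude: every repeat of $\omega_F$ belongs to one of the four classes 1.-4., so the attacker's effective knowledge reduces to $|\omega|$, $k$, together with the data enumerated in 1.-4. The main delicate point of the reduction is not this last step but the completeness assertion of Proposition \ref{p.corr}: one must be sure that no additional algorithm-induced repeats (for instance, accidental coincidences between the dust and genuine segments of $\omega$, or residual overlaps generated by successive runs of $S$ and then re-exposed by $S^{1+}$) have been overlooked in the enumeration. Once this completeness has been granted upstream, the corollary follows immediately from the two-step argument above.
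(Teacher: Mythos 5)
Your argument is correct and follows essentially the same route as the paper, which states the corollary without proof as an immediate consequence of the consistency assumption (Proposition~\ref{p.cons}) together with the completeness of the repeat enumeration (Proposition~\ref{p.corr}). Your additional observation that the random choices in the procedures are $\omega$-independent, and your caveat that the real burden lies in the completeness claim of Proposition~\ref{p.corr}, merely make explicit what the paper leaves implicit.
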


The information $1.,2.,3.$ may be described by the auxiliary
bipartite graph $G(\pi)$ defined in Definition \ref{def.auxg}.

If the weak concealing algorithm is applied, information $[4.]$
does not exist. The attacker problem is thus reduced to the
following:

{\bf The donkey-decision problem.} The input is a bipartite graph
$G$ where the vertices in both parts $V_1, V_2$ are ordered. Let
$V_1=\{u_1,\dots, u_m\}$ and $V_2=\{v_1,\ldots, v_m\}$. Moreover a
segment $s(v)$ of length at least $3k/2$ is associated with each
element of $V_2$. The set of the edges of $G$ is formed by  a
disjoint union of two perfect matchings $M_2, M_3$. The attacker
needs to reconstruct string
$$s(M_2(1))s(M_2(2))\dots s(M_2(m)),$$
where $M_2(i)$ is the vertex of $V_2$ connected with $u_i\in V_1$
by an edge of $M_2$.

\medskip

The difficulty of the donkey-decision problem is the following:
bipartite graph $G$ is a union of two edge-disjoint perfect
matchings. Each vertex of $G$ thus has degree $2$ and $G$ is a
union of disjoint cycles. To solve the donkey-decision problem,
one needs to {\em choose the correct perfect matching}
independently in each of these cycles (namely, the perfect
matching induced by $M_2$). This is impossible, and the list of
all the possibilities is almost always exponential in the number
of the cycles, since each of the cycles has two perfect matchings.
This is analysed precisely below, when we speak about the {\em
feasible solutions}.

Next we argue that, when the strong concealing algorithm is
applied, the attacker problem is reduced to the donkey-decision
problem too. The attacker is left with the statistics of the
repeats of $\omega_F$. Here comes the reason why we introduced the
dust in $S^{1+}$: it is to make sure that the 2-repeats appear
symmetric for both matchings $M_2,M_3$. This hides the repeats
introduced by the initial applications of procedure $S$. The
information of $[4.]$ is thus useless. We obtain:

\begin{proposition}
\label{prop.att} The attacker problem for both strong and weak
concealing is reduced to the analysis of the donkey-decision
problem.
\end{proposition}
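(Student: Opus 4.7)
The plan is to combine Corollary~\ref{p.conc}, which enumerates every piece of information the attacker may exploit, with Observation~\ref{o.grat}, which identifies the target cyclic sequence with the $M_2$-walk in the auxiliary bipartite graph $G(\pi)$ of Definition~\ref{def.auxg}. The weak and strong cases then split according to whether item $[4.]$ is effectively present.

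For the weak algorithm, I would first translate items $[1.]$--$[3.]$ into the data of a donkey-decision instance. Items $[2.]$ and $[3.]$ let the attacker recover each segment $P_i$ together with its attachment to a vertex $v_i\in V_2$, giving the labelling $s(v_i)=P_i$. Item $[1.]$ gives the multiset $C_1\cup C_2$ of cards, and by the remark immediately following the definition of $C_2$ these cards, with their initial/terminal overlaps identifying their endpoints in $V_1$ and $V_2$, are exactly the edges of the perfect matchings $M_2$ and $M_3$. Crucially, since all cards enter the attacker's view on equal footing, she cannot tell which card came from $C_1$ and which from $C_2$; what she actually obtains is the bipartite graph $G=M_2\cup M_3$ equipped with segment labels on $V_2$ but without a distinguished labelling of the two matchings — precisely the input of the donkey-decision problem. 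Observation~\ref{o.grat} then asserts that the sequence she must output is $s(M_2(1))\ldots s(M_2(m))$, completing the reduction.

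For the strong algorithm the same translation applies, but the attacker additionally holds item $[4.]$: the empirical distribution of length-$(k-1)$ repeats inherited from the preliminary runs of $S$. The content of this step is to argue that this distribution carries no signal distinguishing $M_2$ from $M_3$. The key observation is that $S^{1+}$ was designed precisely to symmetrise the profile: on each $M_2$-card the separator $r_i$ is a length-$(k-1)$ segment that already appears at least twice in the input of $S^{1+}$, while on the paired $M_3$-card the dust $r'_i$ is chosen from the same distribution — length exactly $o=k-1$ and repeated elsewhere in $\omega$. Consequently, within each cycle of $G$ the two candidate perfect matchings produce identical empirical profiles of length-$(k-1)$ repeats, so item $[4.]$ is statistically useless and the attacker is again reduced to a donkey-decision instance.

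The main obstacle will be the final symmetry claim. Making it rigorous requires accounting for which length-$(k-1)$ segments of $\omega_F$ actually occur as repeats, with what multiplicities and local contexts, through the full composition $S^{2+}\circ S^{1+}\circ S\circ\cdots\circ S$. The cleanest route is to exhibit, for each cycle of $G$, an involution on the associated cards that swaps the $M_2$- and $M_3$-edges while preserving the empirical distribution of $(k-1)$-segment repeats; under the consistency assumption of Proposition~\ref{p.cons} this involution transfers to an information-preserving symmetry of the attacker's view, which forces the reduction to factor through the donkey-decision problem exactly as in the weak case.
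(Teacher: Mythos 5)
Your proposal follows essentially the same route as the paper: items $[1.]$--$[3.]$ are packaged into the labelled bipartite graph $G=M_2\cup M_3$ of Definition~\ref{def.auxg} with target sequence given by Observation~\ref{o.grat} (the weak case), and in the strong case item $[4.]$ is neutralised by the symmetry between the separators $r_i$ on $C_1$-cards and the dust $r'_i$ on $C_2$-cards, both being length-$o$ segments repeated elsewhere in $\omega$. You are in fact more explicit than the paper, which simply asserts the repeat-statistics symmetry without the involution-style justification you correctly flag as the step needing rigor.
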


A {\em feasible solution} to the donkey-decision problem is any
sequence $s(M(1))s(M(2))\dots s(M(m))$, where $M$ is any perfect
matching of the input bipartite graph $G$. In order to solve the
donkey-decision problem, one needs to choose, from the pool of
these feasible solutions, the unique correct one. Next we argue
that unless the input to our problem is extremely restrictive,
there is an exponential number of the competative solutions.

The bipartite graphs $G$ coming from ${\mathcal A}$ have at least
$2^{m/c}$ perfect matchings. The output sequences of two perfect
matchings $M,N$ may still be equal: if the cycle has length $4$,
this happens if and only if the two vertices $v_i, v_j$ of $V_2$
in each $4-$cycle in which $M,N$ differ have the same associated
segment ($s(v_i)= s(v_j)$ as defined in Observation \ref{o.grat}).

For instance, if all the vertices of $V_2$ have the same
associated segment, then there is only one competative solution.
This extreme situation may happen if the input $\omega$ is a
sequence of repetitions of one symbol only.

If two symbols may appear in the segments (of length at least
$3k/2$) associated with the vertices of $V_2$, then the
probability that in $a$ $4$-cycles the corresponding pairs of
strings are indistinguishable is $2^{-3ka/2}$. Hence with only
exponentially small probability there is less than an exponential
number of feasible solutions.

\section{Conclusion}
\label{S_Conclusion}

We define the information concealing problem and propose an
algorithm to solve it. It is based on the intuition coming from
the difficulties of DNA reconstruction by hybridisation. The
algorithm may be efficiently implemented. In analysing the amount
of information leaked by the concealing algorithm to an attacker
(this is called the {\em attacker problem} in the paper), we first
consider the case that the output contains random repeats; this
leads to the {\it domino problem} which is shown to be
NP-complete. Even if the attacker solves the domino problem, she
is faced with the {\it donkey problem} which is reduced to the
{\it donkey-decision problem}. It is shown that with high
probability the donkey-decision problem has an exponential number
of feasible solutions among which the attacker needs to choose the
correct one.

\end{document}